\documentclass[12pt]{article}

\usepackage[T1]{fontenc}
\usepackage{lmodern}
\usepackage[margin=1in]{geometry}
\usepackage{amsmath,amssymb,amsthm}
\usepackage[authoryear,round]{natbib}
\usepackage{setspace}
\usepackage[colorlinks,citecolor=blue,linkcolor=blue,urlcolor=blue]{hyperref}
\usepackage{etoolbox}

\theoremstyle{plain}

\newtheorem{theorem}{Theorem}
\newtheorem{proposition}{Proposition}
\newtheorem{lemma}{Lemma}
\newtheorem{assumption}{Assumption}

\apptocmd{\appendix}{}{}{}

\title{A Reverse War of Attrition\thanks{We gratefully acknowledge valuable comments on the early concept from Olivier Blanchard, Piotr Dworczak, Paul Klemperer, David K. Levine, Maurice Obstfeld, Sergio Rebelo, Ariel Rubinstein, Tomas Sjöström, and Ivan Werning. An early exposition of the idea was provided by \citet{stanczak2025sticky} prior to the present co-authorship. All remaining errors are ours.}}
\author{Kazimierz Stańczak\thanks{Independent Researcher, Warsaw, Poland. Email: \texttt{stanczakkazimierz@gmail.com}}
  \and Paweł Struski\thanks{University of Warsaw; Group for Research in Applied Economics. Email: \texttt{p.struski@uw.edu.pl}}}
\date{September 23, 2026}

\begin{document}

\maketitle

\begin{abstract}
We formulate a complete-information two-player strategic timing game that we refer to as the reverse war of attrition (RWoA). Waiting yields a strictly positive current rent, and the player that outwaits its rival may earn an even higher flow. However, remaining in the waiting state when the rival has moved triggers a discrete late-mover penalty. The distinctive primitive is a player-owned productive buffer which enables the rent-generating activity while waiting, but is degraded by this activity. A rival’s move can raise the waiting rent and accelerate depletion. We characterize a unique symmetric equilibrium. We apply RWoA to Bertrand competition in a homogeneous-good market, where vintage inventory serves as the buffer clock for timing repricing. We also discuss competitive maintenance. As an extension, we replicate the bilateral game across a continuum of disjoint heterogeneous local markets and, via the exact law of large numbers, derive deterministic aggregate state shares.

\bigskip
\noindent\textbf{Keywords:} war of attrition, preemption, strategic delay, productive buffer.

\noindent\textbf{JEL codes:} C72, D21, D43, L13.
\end{abstract}

\thispagestyle{empty}
\newpage
\setcounter{page}{1}

\section{Introduction}\label{sec:1}

When is it rational to wait before making a move that is known eventually to be necessary? Retailers delay raising prices as long as they can still sell inherited inventory whose replacement cost has increased. Firms postpone maintenance or replacement, which would temporarily shut down or reduce production, while the machines or technologies in use continue to generate profits. Incumbents continue harvesting legacy products before introducing innovations that cannibalize them. In each case, the decision is not whether to move, but when to move.

The standard timing-game benchmarks point in several directions. In a war of attrition (WoA), waiting is costly, but the player who outwaits the rival obtains the prize. In preemption, moving first creates a leader advantage. RWoA has neither the first-mover advantage of preemption nor the unqualified last-mover advantage of classical WoA. In our setup, moving first gives up a positive waiting rent. The player who moves second may actually earn a higher current flow after the rival moves, but in our two-player game that player is also the last mover and incurs a discrete buffer-dependent late-mover penalty. RWoA thus combines a local second-mover advantage in current flow with a
last-mover disadvantage.

The central primitive in RWoA is a finite, player-owned productive buffer that acts as a self-depleting endogenous clock. This buffer may be inventory, remaining machine life, an unused AI compute allocation, or available contracted capacity. It permits the rent-generating activity to continue in the waiting state, but that very activity
progressively exhausts or degrades its stock. Thus the same asset both enables a profitable delay and restricts its duration. Buffers, like inventories or the remaining useful life, are standard in economics; the distinctive use here is to make the productive buffer the strategic clock that links profitable waiting to the discipline on waiting too
long.

The RWoA mechanism seems especially transparent in Bertrand-style price competition. Consider two competing retailers, selling an identical good, each holding vintage inventory at the moment when the replacement cost of inventory has increased. While both retain the legacy price, each earns a positive margin. If one reprices first, the remaining
low-price seller captures diverted demand and earns more profit. Yet that diverted demand accelerates inventory depletion. If the remaining stock is too small, the firm may be unable to fulfill redirected orders, risking customer dissatisfaction from ``empty shelves'' at a low advertised price. Inventory is thus simultaneously the source of the
waiting rent and the clock governing its length. Directionally, the same logic governs the competitive timing of maintenance, where equilibrium strategies critically depend on the remaining useful life. Other candidate applications include nonrenewable reserved AI compute or test capacity and exhaustible reserves supporting a legacy technology.

In what follows, we first study the canonical RWoA mechanism as a two-player complete-information commitment timing game. Each player privately draws at date zero a stopping date and commits to that date for the adjustment episode. Until then it earns a positive flow payoff. If the rival stops first, the remaining player---the second and, by
construction, also the last mover---may earn a higher flow, but the rival's action also triggers a buffer-dependent late-mover penalty. Once a player stops, the transition to the adjusted state is irreversible within the episode, and its continuation value in the timing game is normalized to zero, without loss of generality in the sense made precise in Section \ref{sec:3.1}.

Within this ex-ante committed-date class, no pure-strategy equilibrium exists. The unique symmetric equilibrium we construct is atomless and mixed. A player scheduled to move strictly first can profitably delay by an instant while remaining first, whereas a common positive date is vulnerable to an epsilon-earlier deviation under the random-priority tie convention. The equilibrium has connected support. Its density solves a linear Volterra equation, and a shooting condition selects the unique support start that generates one unit of probability by the terminal waiting boundary.

The commitment assumption is substantive. There are many examples where firms fix dates in advance and changing them is costly. Retailers pre-schedule promotion/repricing windows, often tied to supplier/customer commitments; manufacturers book maintenance shutdowns, crews, and parts; firms schedule IT migrations before legacy systems cease to be supported by suppliers. So a firm may rationally stick to its date even if the rival moves first.

The bilateral model also admits a useful aggregation step under a continuum of disjoint binary markets. We allow these local binary markets to differ in inventories, demand, margins, depletion rates, or other primitives. Each market first solves its own bilateral RWoA and has a type-specific equilibrium timing CDF. Under the exact law of large numbers, individual stopping times remain random, but cross-sectional state shares are deterministic integrals of those type-specific probabilities.

We do not claim any novelty for profitable delay, second-mover advantage, endogenous timing, mixed stopping, or a Volterra or ODE characterization as such. Early and important analyses of second-mover advantages and endogenous timing include \citet{reinganum1985two} and \citet{hamilton1990endogenous}; more recent related work includes \citet{amir2006second}, Smirnov and Wait \citeyearpar{smirnov2015innovation,smirnov2021preemption}, and \citet{karp2025timing}. Nor do we claim that a local second-mover advantage can generate attrition-type behavior: \citet{steg2015symmetric} establishes this in a general stochastic timing setting. Nor do we claim novelty for the bare fact that a timing game may be neither a pure preemption game nor a pure war of attrition: \citet{agastya2006continuing} and \citet{park2008caller} already study spanning classes in which both forces coexist.

Our proposed, more narrowly defined novelty has two distinct layers. First, RWoA combines familiar timing forces into a particular payoff geometry: a positive contemporaneous rent from remaining in the pre-adjustment state, a rival move that can raise that rent, and a last-mover disadvantage that ultimately disciplines delay. Second, the economic mechanism that ties those forces together is a finite, player-owned productive buffer. The buffer itself---inventory, remaining useful life, capacity, or another productive stock---is of course not a new economic object. What is new, and what we emphasize as the distinctive primitive of RWoA, is its use as a self-depleting strategic clock: the stock is required for the profitable waiting activity, that activity runs the stock down, and the rival's move can both raise the return to waiting and make the clock run faster or increase the exposure it creates. To our knowledge,
prior timing-game work has not used a productive buffer in the dual role studied here. In this precise sense, RWoA reverses the primitive waiting-flow sign of classical attrition and locally inverts the preemption force emphasized by \citet{fudenberg1985preemption}. The ingredients are standard separately; our claim is that their combination, together with the productive buffer clock that connects them, defines a distinct timing-game geometry. The RWoA label is not
our contribution: \citet{seel2016reverse} had already used ``The Reverse War of Attrition'' for a completely different game. We retain the label because it transparently captures the waiting-flow sign reversal of WoA that we model, while making no priority claim over the name.

The paper proceeds as follows. Section \ref{sec:2} locates RWoA relative to wars of attrition, preemption, beneficial delay, and second-mover models, as well as theories of sales, inventory, and capacity-limited price competition. Section \ref{sec:3} defines the two-player game. Section \ref{sec:4} establishes the deviation logic and support properties. Section \ref{sec:5} constructs the unique symmetric equilibrium in the committed-date class and shows that it is atomless and absolutely continuous. Section \ref{sec:6} presents the pricing and maintenance applications. Section \ref{sec:7} studies heterogeneous continuum replication. Section \ref{sec:8} concludes.

\section{Related Work}\label{sec:2}

\subsection{War of Attrition and Preemption}\label{sec:2.1}

In the textbook WoA developed by \citet{smith1973logic}, \citet{smith1974theory}, \citet{hendricks1988war},
\citet{bulow1999generalized}, and others, each player wants the other to concede. Waiting imposes a negative flow cost, but the player who outwaits the rival obtains the prize. Equilibrium strategic mixing balances the marginal cost of another instant of delay against the probability that the rival concedes. At the level of the primitive
waiting flow, RWoA reverses the sign of the classical WoA waiting payoff.

In their foundational contribution, \citet{fudenberg1985preemption} provide the canonical model of preemption: a leader advantage creates a threat of preemption and strategic pressure toward earlier adoption. RWoA locally inverts that force. Moving first gives up a positive waiting rent, while the rival's move can increase the current payoff from remaining in the waiting state. Strategic pressure therefore points toward later action until the buffer-dependent late-mover penalty supplies the countervailing force.

RWoA's distinguishing payoff geometry is local. Preemption has a first-mover advantage. Classical WoA rewards the player who outwaits the rival, paying the flow cost while waiting. RWoA instead gives the second mover a local advantage in current flow, while imposing a discrete disadvantage on that same player as the last mover. We model the coexistence of a second-mover flow advantage and a last-mover disadvantage.

That coexistence is not itself new. \citet{agastya2006continuing} study a continuing war of attrition, and \citet{park2008caller} introduce a spanning class of timing games with rank-order payoffs---a lump-sum reward that depends only on a player's ordinal stopping rank---subsuming both the war of attrition and the preemption game under complete information and unobserved actions. RWoA falls outside that class because its payoffs are not rank-measurable. \citet{anderson2017rushes} study a continuum-player timing game with payoff growth over time, in which the payoff depends on the stopping date and the stopping quantile. There, a continuation game is summarized by elapsed time and the fraction that has already stopped, so it does not matter when predecessors moved. By contrast, in RWoA, both the continuation rent and the late-mover exposure are indexed by the rival's date.

\subsection{Beneficial Delay, Second-Mover Advantage, and Priority Risk}\label{sec:2.2}

Beneficial delay, follower advantages, and endogenous timing are well studied. \citet{reinganum1985two} develops a two-stage R\&D model with endogenous second-mover advantages. \citet{hamilton1990endogenous} provide a foundational endogenous-timing framework in which players choose whether to act early or wait. \citet{dutta1995better}
show that later adoption can be advantageous when product quality improves. \citet{amir2006second} study second-mover advantage and endogenous price leadership in a Bertrand duopoly. Smirnov and Wait \citeyearpar{smirnov2015innovation,smirnov2021preemption} analyze
general timing games in which second-mover advantages coexist with rent equalization or even preemption. \citet{karp2025timing} develop a general continuous-time two-player framework for the timing and order of moves, illustrated with price-competition applications. These contributions concern follower advantages, endogenous order, or general timing; RWoA does not claim novelty for any of these objects.

RWoA is most directly related to \citet{steg2015symmetric}, who shows that in a general stochastic timing game a local second-mover advantage may generate a WoA with mixed stopping rates. Therefore, our construct ``second-mover advantage implies attrition'' is not new. RWoA instead imposes a particular economic structure behind that timing force: a positive operating rent exists already while both players wait; the rival's move can raise that rent; and a self-depleting
productive buffer both supports the rent-generating activity and creates the late-mover exposure that prevents
unbounded delay.

A related strand obtains a switch between preemption and attrition regimes from uncertainty. \citet{hoppe2000second} shows that technological uncertainty can turn a preemption game into a waiting game as the probability that adoption is profitable falls, so that second-mover advantages arise from informational spillovers. By contrast, in RWoA, agents act under complete information and the balance between attrition and preemption is governed by the productive buffer.

Very recent work by \citet{fudenberg2026racing} obtains war-of-attrition behavior in a dynamic stopping game under imperfect monitoring; their mechanism is distinct from the productive-buffer clock studied here. 

Among mixed-delay models, RWoA is especially adjacent to \citet{bobtcheff2017researcher}. In their model,
researchers wait to mature private ideas but risk losing priority if the rival discloses first. Their symmetric
two-player equilibrium is mixed and characterized by a differential equation selected through a global boundary
condition. RWoA differs in the source and direction of the payoff force. Waiting yields an explicit contemporaneous
operating rent; the rival's move can raise that rent; and a finite player-owned buffer both supports waiting and
determines the late-mover penalty.

Our equilibrium concept is deliberate rather than a substitute for an unproved refinement. The canonical RWoA is an ex-ante committed-date timing game. Precommitment is an established benchmark in the foundational timing literature: \citet{reinganum1981diffusion} studies technology adoption under precommitment; \citet{fudenberg1985preemption} explicitly contrast their richer timing formalism with Reinganum's precommitment equilibria; \citet{hendricks1988war} characterize Nash equilibrium outcomes in the continuous-time complete-information WoA; and \citet{hamilton1990endogenous} distinguish an observable-delay game from an action-commitment game. By contrast, \citet{laraki2005continuous} and \citet{riedel2017subgame} develop explicitly history-contingent subgame-perfect formulations. SPE is therefore the equilibrium concept for a different game form in which plans can be revised after observed history, not a refinement that the present fixed-date game is intended to satisfy.

\subsection{Productive Buffer Clocks, Irreversibility, and Endogenous Timing}\label{sec:2.3}

A calendar deadline is an exogenous clock: it passes independently of the player's actions. RWoA instead focuses on an endogenous economic clock---a productive stock used by the very activity that makes waiting valuable. Natural examples include vintage inventory, remaining useful life, and exhaustible productive capacity. The economic activity of
waiting therefore runs its own clock.

The novelty claim about the buffer is intentionally narrow. Inventories, machine life, capacity constraints, and exhaustible stocks are standard in economics and operations research. We are not aware, however, of prior timing-game work that makes such a player-owned productive stock the endogenous strategic clock in this dual sense: it is necessary for profitable waiting, it is depleted or degraded by that waiting, and rival action can simultaneously increase the waiting rent and accelerate depletion or the associated late-mover exposure. RWoA therefore does two things at once: it combines familiar strategic timing forces, and it adds a concrete productive-buffer clock that
links them economically.

A recent strand also puts a state variable into the war of attrition, but an exogenous one. In \citet{georgiadis2022absence} the players' flow payoffs while fighting vary with stochastic market conditions; in \citet{gieczewski2025evolving} a public state moves the two players' fighting costs in opposite directions. In both, the state is moved by nature, not by either player's own operation. The RWoA state is instead a player-owned productive stock that finances the waiting rent and is drawn down by the very activity that earns it, faster once the rival has moved.

The buffer must also be distinguished from the fighting fund of the budget-constrained war of attrition. In \citet{dekel2007jump}, \citet{foster2018wars}, and \citet{huangfu2023resource}, a finite resource pays the cost of staying in the contest, and exhausting it forces concession. The RWoA buffer finances a rent rather than a cost.

The one-way transition from waiting to adjustment is also familiar from real-options reasoning. \citet{dixit1994investment} emphasize that when an action is irreversible or costly to reverse, waiting preserves an option that is exercised by committing; \citet{grenadier2002option} embeds option exercise in strategic competition. RWoA uses this broad
commitment logic but does not require exogenous uncertainty. Irreversibility is therefore not itself our novelty claim. The distinctive primitive is the productive buffer clock: a state variable depleted by the same activity that makes waiting profitable and that, after rival action, can run down faster or generate greater late-mover exposure.

Except for its strategic timing game core, our model is naturally related to single-agent operations research. \citet{vanderduynschouten1995maintenance}, for example, condition preventive maintenance on both machine age and intermediate-buffer content in a production system. RWoA instead asks how a rival's action changes the return to continued operation and the depletion or exposure of a player-owned buffer. Strategic interaction may thereby replace
a deterministic intervention rule with mixed timing.

\subsection{Sales, Limited Capacity, and Promotional Pricing}\label{sec:2.4}

The pricing application is related to several adjacent lines of literature. \citet{varian1980model} provides a canonical
equilibrium theory of sales in which strategic pricing generates randomized prices. \citet{gelman1983judo} show how credible capacity limitation---and coupons representing claims on scarce discounted output---can alter the competitive response to a low price. \citet{guimaraes2011sales} show that sales can be strategic substitutes. Inventory competition itself is also well established: \citet{lippman1997competitive} study the competitive newsvendor problem,
while \citet{mahajan2001inventory} analyze inventory competition with dynamic consumer substitution after stockouts. Those models make inventory strategic, but inventory is chosen as a decision variable; in our pricing application, inherited inventory is instead the state variable that supports profitable waiting and clocks the timing of repricing.

RWoA is adjacent but distinct from these mechanisms. Our object is not a static price distribution, entry accommodation, or the frequency of temporary sales. A finite productive buffer supports continued operation at the legacy low price; when the rival reprices, residual demand and the return to remaining low can rise, while the same buffer is depleted faster and determines the exposure to the late-mover disadvantage. The strategic variable is therefore the timing of exit from the low-price state. The same buffer both enables profitable waiting and disciplines its duration: inventory is not merely a constraint on the game but the critical endogenous clock of the game.

\section{The Binary Reverse War of Attrition}\label{sec:3}

\subsection{Players, Timing, and the Buffer}\label{sec:3.1}

There are two symmetric players, $i \in \{1,2\}$. Time is continuous for $0 < t \le T$, where $0 < T < \infty$ is the terminal waiting boundary. Excluding the mathematical origin rules out instantaneous action at the instant the game
opens. Each player begins in a waiting state $W$ and chooses a stopping date at which to move to an adjusted state $A$. A pure strategy is a date $\tau_i$ satisfying $0 < \tau_i \le T$. At date zero, players use mixed strategies over
stopping dates and privately draw their dates. Primitives and distributions are common knowledge; the draws are not observed until actions occur. There is no exogenous uncertainty in payoff primitives or state dynamics; randomization
comes from mixed strategies and, off path, the tie-breaking convention specified below.

Our Theorem \ref{thm1} treats $T$ as the maximal admissible commitment horizon. In a buffer application, $T$ can itself be induced by robust technological feasibility---for example, as the supremum of committed dates that remain feasible for every possible earlier rival move. The equilibrium construction needs only a finite $T$ and therefore takes this horizon in reduced form.

The irreversibility assumption concerns the state transition, not the timing commitment. Before the selected stopping date the player remains in the waiting state; once it moves to the adjusted state, it cannot switch back within the
same adjustment episode. Separately, the stopping date selected at date zero is committed for the episode. We use this benchmark to isolate the productive-buffer mechanism, not because real firms can never revise a plan. Repricing
calendars, promotion windows, IT or advertising updates, supplier and customer commitments, booked maintenance slots, crews, and shutdown plans all make ex post replanning potentially costly. This interpretation has a simple robustness
property: if changing a previously selected date after any history requires a fixed revision cost $\kappa$, compactness of $[0,T]$ and bounded flow payoffs and penalties imply a finite uniform upper bound on the gross gain from any revision. Hence, for sufficiently large $\kappa$, no ex post revision is profitable and the committed-date equilibrium survives in the enlarged game with a revision option. At zero revision cost the continuation problem is
genuinely different: after a rival move has triggered a sunk late-mover exposure, positive post-rival flow can create an incentive to reoptimize, so the equilibrium below need not be subgame perfect in that enlarged, costless-revision
game. Exact SPE is not guaranteed in general continuous-time timing games under weak assumptions---\citet{laraki2005continuous} guarantee Markov subgame-perfect $\varepsilon$-equilibria for every $\varepsilon>0$ and show that stronger exact-existence conclusions require additional structure. We abstract from that richer feedback game in order to keep the main RWoA mechanism transparent.

The one-way transition has a direct economic interpretation in the applications. After a permanent increase in marginal or replacement cost, repricing from a legacy price to the post-shock price completes the adjustment to the
new cost environment; absent a new cost shock, a later return to the legacy price is a new pricing episode rather than an undoing of the same adjustment. Likewise, once a productive asset, like a machine, is withdrawn and maintenance
begins, shutdown, setup, inspection, disassembly, and associated downtime or sunk costs make an immediate switchback economically distinct; completion of maintenance and return to service begins a new operating cycle. This is the same
broad commitment logic emphasized in real-options models of irreversible action \citep{dixit1994investment}.

Once a player moves, its continuation value in the timing payoff is normalized to zero. More precisely, for every fixed rival strategy we may subtract any payoff component that is independent of the player's own stopping date; under
a mixed rival this subtracts its expectation. Such a relative-payoff normalization leaves best responses and Nash equilibria unchanged. Section \ref{sec:6.1}  uses this normalization explicitly.

Subscripts $0$ and $1$ below denote whether the rival has not yet moved or has moved; they are not player indices. The waiting state is supported by a finite buffer. Before the rival moves, the buffer follows
\begin{equation}\label{eq:buffer-pre}
\dot{b}_0(t) = -q_0\bigl(b_0(t),t\bigr), \qquad b_0(0) = b^{\mathrm{init}} > 0,
\end{equation}
where $q_0(b,t) > 0$ is the depletion rate and $b^{\mathrm{init}}$ is the initial buffer level. If the rival moves at date $s$, the remaining player's buffer path may change to
\begin{equation}\label{eq:buffer-post}
\frac{\partial b_1(t;s)}{\partial t} = -q_1\bigl(b_1(t;s),t\bigr), \qquad b_1(s;s) = J\bigl(b_0(s),s\bigr),
\end{equation}
where $J$ is a buffer-reset map satisfying $J(b,s) \le b$, and $q_1(b,t) > 0$ is the post-rival depletion rate. In applications with accelerated depletion, $q_1(b,t) \ge q_0(b,t)$ on the relevant state space. Equations (\ref{eq:buffer-pre})--(\ref{eq:buffer-post}) motivate the reduced-form objects $g_0$, $g_1$, $\Lambda$, and $T$; the equilibrium results below take those objects as primitives.

For formal well-posedness of the buffer construction, assume that $q_0$ and $q_1$ admit extensions to $\mathbb{R} \times [0,T]$ that are continuous in $(b,t)$ and globally Lipschitz-continuous in $b$, with a Lipschitz
constant uniform in $t$. Assume also that $J$ is continuous and maps the economically relevant buffer domain into itself, and restrict $T$ so that the resulting paths remain in that domain. Standard Cauchy-problem theory then yields
a unique pre-rival path $b_0$ and, for every rival stopping date $s$, a unique post-rival path $b_1(\cdot\,;s)$, with continuous dependence on the initial state; continuity in $s$ follows from continuity of the reset map and of the ODE
flow in its initial time and state \citep[Theorem 2.2.2, p.~90]{kolokoltsov2019differential}.

\subsection{Waiting Flows and the Late-Mover Penalty}\label{sec:3.2}

While both players wait, each receives the positive flow payoff
\[
g_0(t) = G_0\bigl(b_0(t),t\bigr) > 0.
\]
If the rival moved at date $s < t$ and player $i$ continues waiting, player $i$ receives
\[
g_1(t;s) = G_1\bigl(b_1(t;s),t\bigr) > 0.
\]
The benchmark rival-enhanced-waiting condition is
\begin{equation}\label{eq:rival-enhanced}
g_1(t;s) \ \ge\ g_0(t) \qquad \text{for } 0 < s < t \le T.
\end{equation}
Thus the rival's move weakly raises the current payoff from continued waiting. This is the model's local second-mover advantage in current flow. It is not a claim that the second mover's total pairwise payoff always exceeds that of the first mover, because the second mover also bears the late-mover penalty defined below. The inequality is imposed throughout the canonical equilibrium analysis in Sections \ref{sec:4} and \ref{sec:5}; applications outside that class may relax it.

If the rival moves first at date $s$, the remaining player---who is now the second and last mover---incurs the late-mover penalty
\[
\Lambda(s) \equiv \mathcal{L}\bigl(b_0(s),s\bigr) > 0.
\]
Here $\mathcal{L}$ is a penalty map evaluated at the actual rival-action date and the prevailing pre-rival buffer. The late-mover penalty may be literal or the date-$s$ present value of an irreversible exposure: unfilled commitments,
emergency procurement, downtime, durable customer loss incurred by offering ``empty shelves'', failure, or obsolescence. Post-rival flows included in $g_1$ are excluded from $\Lambda$, so there is no double counting. A positive late-mover penalty need not imply that the remaining player's total payoff is always below that of the first mover.

The trigger convention is substantive. Once the rival moves first at date $s$, the exposure $\Lambda(s)$ is locked in at that date: moving an instant later does not retroactively erase it. Economically, $\Lambda(s)$ can be the date-$s$
present value of commitments, emergency actions, or durable customer or operating losses set in motion by the rival's move. This discrete order-of-moves exposure is what makes being second and last costly. If instead the late-mover loss vanished continuously as the player's own stopping date approached $s$ from above, the no-pure and no-atom arguments below would require a different model.

\subsection{Pairwise Payoffs}\label{sec:3.3}

Let $r \ge 0$ be the discount rate. Let player $i$ choose $t$ and player $j$ choose $s$. Ignoring the tie case for the moment, player $i$'s payoff is
\[
\Pi_i(t,s) = \int_0^{t} e^{-ru} g_0(u)\, du, \qquad t < s,
\]
and
\[
\Pi_i(t,s) = \int_0^{s} e^{-ru} g_0(u)\, du + \int_s^{t} e^{-ru} g_1(u;s)\, du - e^{-rs}\Lambda(s), \qquad s < t.
\]
At an exact tie, random priority assigns each player the first-mover outcome with probability one-half and the late-mover outcome with probability one-half. Because the post-rival interval then has zero length,
\begin{equation}\label{eq:tie-payoff}
\Pi_i(t,t) = \int_0^{t} e^{-ru} g_0(u)\, du - \tfrac{1}{2} e^{-rt}\Lambda(t).
\end{equation}
Random priority is a tie-breaking convention, not uncertainty about the economic primitives. It matters only at exact ties: the no-pure and no-atom arguments require a strictly positive expected late-mover exposure at a tie. The
equilibrium constructed below is atomless, so tie payoffs are off path and do not enter the Volterra equation or the equilibrium density. If exact simultaneity instead eliminated the late-mover exposure with probability one,
Propositions \ref{prop1} and \ref{prop2} would not follow as stated.

\subsection{Expected Payoff Against a Mixed Rival}\label{sec:3.4}

Let the rival use an atomless CDF $F$ over dates satisfying $0 < t \le T$. Atomlessness makes endpoint conventions immaterial. The expected payoff from choosing $t$ is
\begin{equation}\label{eq:expected-payoff}
U(t;F) = \int_0^{t} e^{-ru}\left[\bigl(1 - F(u)\bigr) g_0(u) + \int_0^{u} g_1(u;s)\, dF(s)\right] du
- \int_0^{t} e^{-rs} \Lambda(s)\, dF(s).
\end{equation}
If $F$ has density $f$, differentiation at a continuity point gives
\begin{equation}\label{eq:payoff-diff}
\frac{\partial U(t;F)}{\partial t} = e^{-rt}\left[\bigl(1 - F(t)\bigr) g_0(t) + \int_0^{t} g_1(t;s) f(s)\, ds
- \Lambda(t) f(t)\right].
\end{equation}
Discounting cancels from the local indifference condition because all marginal terms are evaluated at the same date. It still enters $\Lambda$ whenever the late-mover penalty capitalizes future consequences. The deviation arguments
below use continuity and boundedness of the primitives on their compact domains; Assumption \ref{assump1} states the regularity conditions used for equilibrium construction.

\section{Deterministic Timing, Atoms, and Support}\label{sec:4}

\begin{proposition}[No positive-time pure-strategy equilibrium]\label{prop1}
Under the random-priority tie convention (\ref{eq:tie-payoff}), suppose $g_0$ and $\Lambda$ are continuous and strictly positive on $[0,T]$. No pure-strategy Nash equilibrium exists in the committed-date game. Moreover, no common positive
date can be implemented by a nonbinding agreement.
\end{proposition}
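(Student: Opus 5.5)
The plan is to take an arbitrary pure profile $(t,s)$ with $0<t,s\le T$ and split into the asymmetric case $t\neq s$ and the tie case $t=s$. In each case I will exhibit a strictly profitable unilateral deviation, using only continuity and strict positivity of $g_0$ and $\Lambda$ on $[0,T]$ together with the payoff formulas of Section \ref{sec:3.3} and the tie payoff (\ref{eq:tie-payoff}).

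\emph{Asymmetric case.} Without loss of generality $t<s$, so player $i$ is strictly first and receives $\int_0^t e^{-ru}g_0(u)\,du$. For any $t'\in(t,s)$, player $i$ remains strictly first and gains $\int_t^{t'} e^{-ru}g_0(u)\,du>0$, because $g_0>0$. The interval $(t,s)$ is nonempty and contained in $(0,T]$, so such a $t'$ exists. This deviation is feasible and strictly profitable, hence no equilibrium has $t\neq s$. This is the ``delay by an instant while remaining first'' logic.

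\emph{Tie case.} At a common date $t\in(0,T]$, each player gets $\int_0^t e^{-ru}g_0\,du-\tfrac12 e^{-rt}\Lambda(t)$. Consider a deviation to $t-\varepsilon$ with $0<\varepsilon<t$; this is admissible because $t>0$ and dates lie in $(0,T]$. The deviator is then strictly first and gets $\int_0^{t-\varepsilon}e^{-ru}g_0\,du$. The gain equals
\[
\tfrac12 e^{-rt}\Lambda(t)-\int_{t-\varepsilon}^{t}e^{-ru}g_0(u)\,du .
\]
The integral is at most $\varepsilon\max_{[0,T]}g_0$, which is finite by continuity on a compact set. Since $\Lambda(t)>0$, the gain is strictly positive for all small $\varepsilon$. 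Hence no tie is an equilibrium. Note that a deviation upward from a tie need not help, because $\Lambda(t)$ is locked in, so the earlier deviation is the right choice here. This step is the only place the random-priority convention and the positivity of $\Lambda$ enter.

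\emph{Nonbinding agreement.} A nonbinding agreement on a common date $\bar t>0$ is self-enforcing only if $(\bar t,\bar t)$ is a Nash equilibrium of the committed-date game, since each party privately commits its own date. The tie case shows that an $\varepsilon$-earlier defection is strictly profitable, so no such agreement can be implemented.

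The main obstacle is mild. One must make sure the deviations remain in the admissible set $(0,T]$, which is why the exclusion of date $0$ matters: it guarantees $t-\varepsilon>0$ exists and that no equilibrium sits at the origin. One must also check that the strict inequality survives at $t=T$, which it does, since the argument for the tie case there uses only the earlier deviation.
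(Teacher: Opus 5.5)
Your proposal is correct and follows the paper's proof essentially step for step. You delay within $(t,s)$ when strictly first; at a tie you deviate $\varepsilon$ earlier, which removes the $\tfrac12 e^{-rt}\Lambda(t)$ exposure at an $O(\varepsilon)$ flow cost; and the same deviation defeats a nonbinding joint-date agreement. Your explicit bound $\int_{t-\varepsilon}^{t} e^{-ru} g_0(u)\,du \le \varepsilon \max_{[0,T]} g_0$ simply makes precise the paper's $O(\varepsilon)$ step.
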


\begin{proof}
If $t_i < t_j$, player $i$ can delay to $t_i + \varepsilon < t_j$. It remains the first mover, incurs no late-mover penalty, and earns an additional strictly positive flow. The case $t_j < t_i$ is symmetric. If $t_i = t_j = t > 0$,
each player bears the discrete expected late-mover penalty $e^{-rt}\Lambda(t)/2$. Moving to $t - \varepsilon$ eliminates that exposure while sacrificing only $O(\varepsilon)$ flow. The deviation is profitable for sufficiently small $\varepsilon$. The same deviation defeats a nonbinding recommendation to stop jointly at $t$.
\end{proof}

\begin{proposition}[No positive-time atoms]\label{prop2}
Suppose $g_0$ and $\Lambda$ are continuous and strictly positive on $[0,T]$, and $g_1$ extends continuously to the closed triangle $0 \le s \le t \le T$. Under the random-priority tie convention \textup{(\ref{eq:tie-payoff})}, a symmetric equilibrium
cannot contain an atom at any $t>0$.
\end{proposition}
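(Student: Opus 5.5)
We argue by contradiction, mirroring the tie half of Proposition \ref{prop1} but now against a mixed rival. Suppose a symmetric equilibrium $F$ has an atom of mass $p>0$ at some $t_0\in(0,T]$. Since $F$ is now not assumed atomless, we first extend (\ref{eq:expected-payoff}) to general $F$. Write $U(t;F)$ as the expectation of $\Pi_i(t,s)$ over $s\sim F$, using (\ref{eq:tie-payoff}) on the event $s=t$. Because both players use $F$, the date $t_0$ lies in the support of the common strategy and carries positive probability. Equilibrium then requires $U(t_0;F)\ge U(t;F)$ for every $t\in(0,T]$. The plan is to exhibit a slightly earlier date $t_0-\varepsilon$ that does strictly better.

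For $0<\varepsilon<t_0$, decompose $U(t_0;F)-U(t_0-\varepsilon;F)$ according to the rival's date $s$.

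\emph{Rival moves at $s<t_0-\varepsilon$.} Both choices are second-mover outcomes with the same penalty $e^{-rs}\Lambda(s)$. The payoffs differ only by $\int_{t_0-\varepsilon}^{t_0}e^{-ru}g_1(u;s)\,du$, which is at most $\varepsilon\bar g_1$, where $\bar g_1=\max g_1$ on the closed triangle (this is where the continuous extension is used).

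\emph{Rival moves at $s\in[t_0-\varepsilon,t_0)$.} The deviator becomes first mover. The difference is bounded by $\varepsilon(\bar g_0+\bar g_1)$ plus the penalty term $e^{-rs}\Lambda(s)\ge 0$. That penalty term only works against $t_0$, so it can be dropped in our favour. If we include the possible tie at $t_0-\varepsilon$, its extra contribution is bounded by $\bar\Lambda\,F(\{t_0-\varepsilon\})$.

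\emph{Rival moves at $s=t_0$.} Under $t_0$ the player receives the tie payoff $\int_0^{t_0}e^{-ru}g_0\,du-\tfrac12 e^{-rt_0}\Lambda(t_0)$. Under $t_0-\varepsilon$ it is first mover and gets $\int_0^{t_0-\varepsilon}e^{-ru}g_0\,du$. The loss from staying at $t_0$ is therefore at least $p\bigl[\tfrac12 e^{-rt_0}\Lambda(t_0)-\varepsilon\bar g_0\bigr]$.

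\emph{Rival moves at $s>t_0$.} Both are first-mover payoffs, differing by at most $\varepsilon\bar g_0$.

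Summing the four cases gives
\[
U(t_0;F)-U(t_0-\varepsilon;F)\ \le\ -\tfrac{p}{2}e^{-rt_0}\Lambda(t_0)+C\varepsilon+\bar\Lambda\,F(\{t_0-\varepsilon\}),
\]
with $C$ depending only on the sup norms of $g_0$ and $g_1$. These sup norms are finite by continuity on compact sets. Strict positivity of $\Lambda$ at $t_0$ makes the first term a fixed negative constant.

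The main obstacle is the stray atom term $F(\{t_0-\varepsilon\})$. We handle it by choosing $\varepsilon$ well. A CDF has at most countably many atoms, so we can pick $\varepsilon$ arbitrarily small with $t_0-\varepsilon$ not an atom of $F$, which kills that term. Taking such $\varepsilon<\tfrac{p}{4C}e^{-rt_0}\Lambda(t_0)$ then makes the difference strictly negative. So $t_0-\varepsilon$ is a strictly profitable deviation from the atom, contradicting equilibrium.

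Two further points complete the argument.
\begin{enumerate}
\item The case $t_0=T$ is included, since the deviation goes leftward and stays inside $(0,T]$.
\item The exclusion of $t=0$ in the statement reflects that no leftward deviation is available there.
\end{enumerate}
Continuity of $g_0$ at $t_0$ is not even needed beyond boundedness, so the argument is robust.
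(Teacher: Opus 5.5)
Your argument is correct and is the paper's proof made quantitative: you deviate to a slightly earlier date $t_0-\varepsilon$, chosen off the countable atom set, which removes the fixed tie loss $\tfrac{p}{2}e^{-rt_0}\Lambda(t_0)$ at a cost of order $\varepsilon$. The only cosmetic difference is that on $[t_0-\varepsilon,t_0)$ you sign the penalty term in your favour, whereas the paper instead uses the fact that $F\bigl((t_0-\varepsilon,t_0)\bigr)\to 0$.
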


\begin{proof}
Suppose the rival places mass $p > 0$ at $t$. At a symmetric atom, choosing $t$ must be optimal, but on the atom event it entails the expected late-mover penalty $p e^{-rt}\Lambda(t)/2$. Choose a sequence $\varepsilon_n \downarrow 0$
such that $t - \varepsilon_n > 0$ and the rival has no atom at $t - \varepsilon_n$; this is possible because a probability distribution has at most countably many atoms. A deviation to $t - \varepsilon_n$ eliminates the fixed
tie-event loss. By continuity and boundedness of the primitives, and because the probability assigned to $(t - \varepsilon_n, t)$ tends to zero, every other payoff change tends to zero. Hence the deviation is profitable for all sufficiently large $n$.
\end{proof}

\begin{lemma}[Connected support]\label{lem1}
Suppose $g_0$ and $\Lambda$ are continuous and strictly positive on $[0,T]$, and $g_1$ extends continuously to the closed triangle with $g_1(t;s)>0$. In any symmetric atomless mixed-strategy equilibrium, the closure of the support is
an interval $[t_0,T]$ for some $t_0$ satisfying $0 \le t_0 < T$.
\end{lemma}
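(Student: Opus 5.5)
We argue with the expected-payoff representation (\ref{eq:expected-payoff}) against the symmetric atomless rival CDF $F$. Write $S$ for the closure of the support of $F$, which is compact in $[0,T]$. It suffices to prove two facts: (i) $S$ has no interior gaps, that is, no open interval $(a,c)$ with $a<c$, $a,c \in S$, and $F$ constant on $(a,c)$; and (ii) $\max S = T$. Together with $F$ being atomless, which makes $S$ infinite and rules out $S=\{T\}$, these give $S=[t_0,T]$ with $t_0=\min S<T$.

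For (i), suppose a gap $(a,c)$ exists. Since $F$ puts no mass on $(a,c)$, for $t\in[a,c]$ the penalty term in $U(t;F)$ is constant, because the measure $dF$ charges nothing there. The flow integrand equals $e^{-ru}[(1-F(a))g_0(u)+\int_0^{a} g_1(u;s)\,dF(s)]$. If $F(a)<1$ this is strictly positive, using $g_0>0$ and $g_1>0$. If $F(a)=1$, then $c$ could not be in the support, so $F(a)<1$ holds. Hence $U(\cdot;F)$ is strictly increasing on $[a,c]$, and $U(c;F)>U(t;F)$ for $t$ slightly above $a$. Points of $S$ near $a$, taken from the left of $a$ or at $a$, must be approximately optimal. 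To make this rigorous, use continuity of $U(\cdot;F)$, which follows from atomlessness and boundedness of the primitives. A standard equilibrium fact holds: $U$ equals its supremum on $S$ $F$-a.e., and by continuity on all of $S$, including $a$. Then $U(a)<U(c)\le \sup U$ contradicts optimality of $a$.

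For (ii), suppose $m=\max S<T$. Then $F(m)=1$, and for $t\in[m,T]$ the penalty term is frozen at its full value. The flow integrand is $e^{-ru}\int_0^{m} g_1(u;s)\,dF(s)>0$, using positivity of $g_1$ on the closed triangle. So $U(T;F)>U(m;F)$, which contradicts optimality of $m\in S$.

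The step I expect to be the main obstacle is the claim that every point of $S$, and not merely $F$-almost every point, attains the equilibrium payoff. I will handle it by first showing that $U(\cdot;F)$ is continuous on $(0,T]$. The flow part is absolutely continuous with bounded integrand. The penalty part $\int_0^t e^{-rs}\Lambda(s)\,dF(s)$ is continuous because $F$ has no atoms, by Proposition \ref{prop2} and the hypothesis. Next, the set where $U<\sup U$ is open, and the support of $F$ has measure zero intersection with it. Any point of $S$ lying in that open set would give it positive mass, a contradiction. The edge case $t_0=0$ is permitted, since the statement allows $t_0\ge0$, so no argument at the origin is needed.
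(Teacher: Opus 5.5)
Your proof is correct and follows the paper's argument. On any gap only the strictly positive expected-flow term moves, so payoff rises across it. An upper endpoint below $T$ is ruled out because waiting longer adds positive post-rival flow while the penalty term is already fixed. Your extra step---continuity of $U(\cdot;F)$ from atomlessness, so that the open null set $\{U<\sup U\}$ misses the support and every support point, including gap endpoints, attains the equilibrium payoff---supplies the justification the paper leaves implicit when it says the boundary points ``cannot yield the common equilibrium payoff.''
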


\begin{proof}
Suppose an open gap $(\alpha,\beta)$ separates two support components. On the gap, the rival distribution assigns no mass. Differentiating (\ref{eq:expected-payoff}) there leaves only the expected current-flow term, which is strictly positive because $g_0$
and $g_1$ are positive. Payoff therefore rises across the gap, so both boundary points cannot yield the common equilibrium payoff. The support is connected. Let $t^U$ be its upper endpoint. If $t^U < T$, choosing a date just above $t^U$ leaves the late-mover penalty unchanged---because the rival stops by $t^U$ almost surely---but adds strictly positive post-rival flow. Hence $t^U = T$. Assumption \ref{assump2} below will imply $t_0 > 0$.
\end{proof}

\section{The Symmetric Mixed-Strategy Equilibrium}\label{sec:5}

\subsection{The Volterra Indifference Equation}\label{sec:5.1}

By Proposition \ref{prop2}, any symmetric equilibrium is atomless; by Lemma \ref{lem1}, the closure of its support is $[t_0,T]$. Let $F$ be such a candidate equilibrium, with $F(t) = 0$ for $t < t_0$, $F(t_0) = 0$, and $F(T) = 1$. Expected payoff $U(\cdot\,;F)$ is continuous, and the support indifference condition therefore makes it constant on the entire support. In (\ref{eq:expected-payoff}), the expected-flow component is absolutely continuous in the chosen date, whereas the late-mover-penalty component has Stieltjes measure $e^{-rt}\Lambda(t)\, dF(t)$. Hence this weighted Stieltjes measure must be absolutely continuous with respect to $dt$. Because $e^{-rt}\Lambda(t)$ is continuous and bounded away from zero on $[0,T]$, $F$ is absolutely continuous. Writing $f = dF/dt$, for almost every date in the interior of the support, indifference requires
\begin{equation}\label{eq:indifference}
\Lambda(t) f(t) = \bigl(1 - F(t)\bigr) g_0(t) + \int_{t_0}^{t} g_1(t;s) f(s)\, ds.
\end{equation}
Using $F(t) = \int_{t_0}^{t} f(s)\, ds$, this becomes the linear Volterra equation
\begin{equation}\label{eq:volterra}
f(t) = a(t) + \int_{t_0}^{t} K(t,s) f(s)\, ds,
\end{equation}
where
\[
a(t) = \frac{g_0(t)}{\Lambda(t)}, \qquad K(t,s) = \frac{g_1(t;s) - g_0(t)}{\Lambda(t)}.
\]
Under the benchmark condition, $K$ is nonnegative. The right-hand side of (\ref{eq:volterra}) is continuous, so the density admits a continuous version. The density needed to discipline waiting equals the baseline flow-to-penalty ratio plus the
additional density induced by the higher post-rival waiting flow.

\subsection{Assumptions and Shooting}\label{sec:5.2}

\begin{assumption}[Regular binary RWoA]\label{assump1}
(i) $g_0$ and $\Lambda$ are continuous on $[0,T]$, with $g_0(t) > 0$ and $\Lambda(t) > 0$. (ii) $g_1$ extends
continuously to the closed triangle $0 \le s \le t \le T$ and satisfies $g_1(t;s) \ge g_0(t)$. We study independent
mixed strategies over committed dates satisfying $0 < t \le T$ and characterize symmetric equilibria.
\end{assumption}

For shooting purposes, extend the candidate lower boundary to $\tau = 0$. For each $\tau \in [0,T]$, let $f_\tau$ be the unique continuous solution of
\[
f_\tau(t) = a(t) + \int_{\tau}^{t} K(t,s) f_\tau(s)\, ds, \qquad t \in [\tau,T].
\]
Existence and uniqueness for every $\tau$ follow from standard linear Volterra theory; Appendix \ref{appA1} gives a self-contained successive-approximation proof (see also \citealp[Chapter 1]{brunner2017volterra}, and
\citealp[Theorem 2.1.1, pp.~86--87]{kolokoltsov2019differential}). Define
\[
F_\tau(t) = \int_{\tau}^{t} f_\tau(s)\, ds, \qquad M(\tau) = F_\tau(T).
\]

\begin{assumption}[Sufficient probability mass]\label{assump2}
$M(0) > 1$. Since $K \ge 0$ implies $f_0 \ge a$, a transparent sufficient condition is $\int_0^{T} g_0(t)/\Lambda(t)\, dt > 1$.
\end{assumption}

The condition says that if mixing began arbitrarily close to zero, the indifference equation would generate more than one unit of probability by $T$. The equilibrium support must therefore begin strictly later.

\begin{lemma}[Shooting map]\label{lem2}
Under Assumption \ref{assump1}, $M(\tau)$ is continuous and strictly decreasing in $\tau$, with $M(T) = 0$.
\end{lemma}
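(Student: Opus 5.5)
The plan is to compare the solutions $f_\tau$ for different starting points through a second linear Volterra equation satisfied by their difference. All three claims then follow from a positivity property and a uniform bound. The endpoint claim is immediate: $M(T)=F_T(T)=\int_T^T f_T(s)\,ds=0$.

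\emph{Step 1 (positivity and a uniform bound).} First I show that for every $\tau\in[0,T]$ the solution satisfies $f_\tau\ge a>0$ on $[\tau,T]$. I use the successive-approximation (Neumann series) construction of Appendix \ref{appA1}. Starting from $f^{(0)}=a$, each iterate adds $\int_\tau^t K(t,s)f^{(n)}(s)\,ds$. Since $K\ge 0$ by Assumption \ref{assump1}(ii) and $a>0$ by Assumption \ref{assump1}(i), every iterate is at least $a$, and the uniform limit inherits this.

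The same argument gives a general comparison principle: if $h\ge 0$ is continuous and $d$ solves $d(t)=h(t)+\int_{\tau'}^t K(t,s)d(s)\,ds$, then $d\ge 0$.

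For the bound, let $\bar a=\max a$ and $\bar K=\max K$; both are finite by continuity on compact sets. Gronwall's inequality then gives $0<f_\tau(t)\le C:=\bar a e^{\bar K T}$ uniformly in $\tau$ and $t$.

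\emph{Step 2 (the difference equation).} Fix $0\le\tau<\tau'\le T$ and set $d=f_\tau-f_{\tau'}$ on $[\tau',T]$. Subtracting the two Volterra equations gives
\[
d(t)=h(t)+\int_{\tau'}^{t}K(t,s)\,d(s)\,ds,\qquad h(t)=\int_{\tau}^{\tau'}K(t,s)f_\tau(s)\,ds .
\]
Here $h$ is continuous and satisfies $0\le h\le \bar K C(\tau'-\tau)$. By the comparison principle of Step 1, $d\ge 0$. By Gronwall, $d\le \bar K C(\tau'-\tau)e^{\bar K T}$.

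\emph{Step 3 (monotonicity and continuity).} Split the mass as
\[
M(\tau)-M(\tau')=\int_\tau^{\tau'}f_\tau(s)\,ds+\int_{\tau'}^{T}d(s)\,ds .
\]
By Steps 1 and 2, this is at least $\int_\tau^{\tau'}a(s)\,ds>0$, which gives strict decrease. It is also at most $\bigl(C+T\bar K Ce^{\bar K T}\bigr)(\tau'-\tau)$, so $M$ is Lipschitz and in particular continuous.

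The main obstacle is Step 2's sign claim $d\ge0$. It requires the comparison principle for Volterra equations with nonnegative kernels, so I would state the Neumann-series positivity in Step 1 carefully enough to reuse it there verbatim. Everything else is Gronwall bookkeeping.
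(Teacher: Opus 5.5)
Your proposal is correct and follows essentially the same route as the paper's proof in Appendix~\ref{appA2}: the same difference equation with nonnegative forcing $\int_{\tau}^{\tau'}K(t,s)f_\tau(s)\,ds$, positivity via Picard iteration, the uniform bound $C=\bar a e^{\bar K T}$ from Gronwall, and the identical Lipschitz constant $C+T\bar K C e^{\bar K T}$ for $M$. Your explicit comparison principle and the lower bound $M(\tau)-M(\tau')\ge\int_\tau^{\tau'}a(s)\,ds$ only make precise what the paper invokes as positivity of the Volterra resolvent.
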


\begin{proof}
Continuity follows from the explicit Lipschitz estimate in Appendix \ref{appA2}. If $\tau_1 < \tau_2$, the solution beginning at $\tau_1$ accumulates strictly positive density on $[\tau_1,\tau_2]$. On $[\tau_2,T]$, the difference between the two solutions satisfies a Volterra equation with nonnegative forcing and nonnegative kernel, so it is nonnegative. Hence $M(\tau_1) > M(\tau_2)$. Appendix \ref{appA2} gives the details.
\end{proof}

\begin{theorem}[Unique symmetric equilibrium in committed stopping dates]\label{thm1}
Under Assumptions \ref{assump1} and \ref{assump2}, there is a unique $t_0$ satisfying $0 < t_0 < T$ and $M(t_0) = 1$. The CDF
$F^{\star}(t) = 0$ for $t < t_0$ and $F^{\star}(t) = \int_{t_0}^{t} f_{t_0}(s)\, ds$ for $t \in [t_0,T]$ is a symmetric Nash equilibrium of the admissible ex-ante committed-date game. It is atomless, absolutely continuous, and has connected support. It is the unique symmetric equilibrium in this strategy class; no claim is made here about asymmetric equilibria.
\end{theorem}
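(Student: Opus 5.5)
The proof has three parts: locate $t_0$ by the intermediate value theorem applied to the shooting map, check that $F^\star$ is an equilibrium by direct payoff comparison, and establish uniqueness from the necessary conditions already derived in Section \ref{sec:5.1}.

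\emph{Existence and uniqueness of $t_0$.} By Lemma \ref{lem2}, $M$ is continuous and strictly decreasing on $[0,T]$ with $M(T)=0$. Assumption \ref{assump2} gives $M(0)>1$. The intermediate value theorem then yields some $t_0\in(0,T)$ with $M(t_0)=1$, and strict monotonicity makes it unique. The strict inequalities $0<t_0<T$ come from $M(0)>1>0=M(T)$.

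\emph{$F^\star$ is a well-defined CDF.} Because $K\ge 0$ and $a>0$, we have $f_{t_0}\ge a>0$ on $[t_0,T]$. A cleaner route than comparison is to note that $f_{t_0}$ is the limit of the successive approximations from Appendix \ref{appA1}, each of which is $\ge a$. Hence $F^\star$ is continuous and strictly increasing on $[t_0,T]$, with $F^\star(t_0)=0$ and $F^\star(T)=M(t_0)=1$. It is therefore atomless, absolutely continuous, and its support is exactly $[t_0,T]$, which is connected.

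\emph{$F^\star$ is a best response to itself.} Substitute $F^\star$ into (\ref{eq:payoff-diff}). For $t\in(t_0,T)$, the Volterra equation for $f_{t_0}$ is equivalent to (\ref{eq:indifference}), so $\partial U/\partial t=0$ and $U(\cdot;F^\star)$ is constant on $[t_0,T]$. For $t\in(0,t_0)$, the rival has not moved, so $F^\star=0$ and the derivative equals $e^{-rt}g_0(t)>0$; thus $U$ is strictly increasing there and $U(t)<U(t_0)$. Since $F^\star$ is atomless, ties have probability zero and the tie convention is irrelevant for every pure deviation. Dates beyond $T$ are not admissible. Every date in the support is therefore optimal, and $F^\star$ is a symmetric Nash equilibrium.

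\emph{Uniqueness within symmetric equilibria.} Let $F$ be any symmetric equilibrium.
\begin{enumerate}
\item Proposition \ref{prop2} makes $F$ atomless.
\item Lemma \ref{lem1} makes the closure of its support $[t_1,T]$ for some $t_1\in[0,T)$.
\item The argument of Section \ref{sec:5.1} shows $F$ is absolutely continuous, with a density satisfying (\ref{eq:volterra}) almost everywhere on $(t_1,T)$ with lower limit $t_1$.
\item The right-hand side of (\ref{eq:volterra}) is continuous, so the density has a continuous version. That version solves the Volterra equation started at $t_1$, and uniqueness of Volterra solutions from Appendix \ref{appA1} forces $f=f_{t_1}$.
\item Then $1=F(T)=M(t_1)$, and strict monotonicity of $M$ gives $t_1=t_0$. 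This also excludes $t_1=0$, because $M(0)>1$. Hence $F=F^\star$.
\end{enumerate}

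The main obstacle is step 4. The indifference condition holds only almost everywhere, and differentiating $U$ requires care, since $U$ is built from a Stieltjes term. The plan here is to note that $U$ is constant on the support and that the flow part of $U$ is $C^1$. This makes $\int e^{-rs}\Lambda\,dF$ equal to a $C^1$ function on the support. The density is then that function's derivative divided by $e^{-rt}\Lambda$, which is continuous, so the density is continuous and matches $f_{t_1}$ everywhere rather than just almost everywhere.
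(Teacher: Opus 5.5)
Your proposal is correct and follows essentially the same route as the paper's proof. The paper also locates $t_0$ from the shooting map via the intermediate value theorem, verifies the equilibrium through the sign of the payoff derivative, and obtains uniqueness from Proposition~\ref{prop2}, Lemma~\ref{lem1}, Section~\ref{sec:5.1}, and strict monotonicity of $M$. Your $C^1$ argument for a continuous density simply spells out the paper's remark in Section~\ref{sec:5.1} that the density admits a continuous version.
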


\begin{proof}
Lemma \ref{lem2}, Assumption \ref{assump2}, and the intermediate value theorem give a unique $t_0$ with $M(t_0) = 1$. Positivity of $a$ and nonnegativity of $K$ imply $f_{t_0} > 0$, so $F^{\star}$ is continuous and strictly increasing from zero to one. On the support, equation (\ref{eq:indifference}), equivalently (\ref{eq:volterra}), makes the payoff derivative in (\ref{eq:payoff-diff}) equal to zero almost everywhere; because expected payoff is absolutely continuous, it is therefore constant on the support. Before $t_0$, the rival never moves and the derivative is $e^{-rt} g_0(t) > 0$, so every earlier date is strictly worse; dates after $T$ are infeasible. For uniqueness, consider any other symmetric equilibrium. Proposition \ref{prop2} makes it atomless; Lemma \ref{lem1} gives connected support with some lower endpoint; Section \ref{sec:5.1} makes it absolutely continuous with a density satisfying the corresponding Volterra equation. A lower endpoint of zero would require $M(0) = 1$, contradicting Assumption \ref{assump2}. Hence its lower endpoint is positive, and strict monotonicity of the shooting map selects the same support start and density.
\end{proof}

Note that the assumptions of Theorem \ref{thm1} are conditions on the reduced-form objects $g_0$, $g_1$, $\Lambda$, and $T$. Theorem \ref{thm1} therefore applies whether or not those objects are generated by a productive buffer; the buffer's role is to provide the economic mechanism that generates them, as illustrated in Section \ref{sec:6}.

\subsection{Markov-Separable Benchmark}\label{sec:5.3}

Suppose the post-rival flow depends on the current date but not on the exact rival stopping date: $g_1(t;s) = g_1(t)$. Then
\begin{equation}\label{eq:markov-sep-ode}
\dot{F}(t) = \frac{g_0(t) + \bigl[g_1(t) - g_0(t)\bigr] F(t)}{\Lambda(t)}, \qquad F(t_0) = 0, \quad F(T) = 1.
\end{equation}
The local indifference condition has the direct interpretation
\[
\text{marginal expected waiting flow} \;=\; \text{late-mover penalty at } t \;\times\; \text{rival density at } t.
\]
At a fixed date and CDF value, a larger late-mover penalty lowers the density needed to discipline waiting. The induced change in the support start is a global object determined by the terminal mass condition; comparative statics
must therefore account for the boundary condition, not only the ODE's right-hand side.

\section{Applications of the RWoA Productive Buffer Clock}\label{sec:6}

\subsection{Bertrand Pricing with Vintage Inventory}\label{sec:6.1}

Consider two retailers selling a homogeneous product with perfectly inelastic total demand $D > 0$. The legacy price is $p_L$ and the adjusted price is $p_H > p_L$. Let $m_L > 0$ denote per-unit contribution on vintage inventory sold
at $p_L$, and let $m_H \ge 0$ denote per-unit contribution once both firms have adjusted and split demand at $p_H$. If one firm has adjusted while its rival remains at $p_L$, homogeneous-good Bertrand demand goes to the low-price seller,
so the adjusted firm sells none of this product until the rival also adjusts. To map this product market exactly into the normalized timing payoff of Section \ref{sec:3}, fix the rival stopping date $s$ and subtract the payoff the player would obtain by adjusting immediately at date zero against that same $s$. This benchmark is independent of the player's own chosen stopping date and therefore does not affect best responses; under a mixed rival, its expectation is likewise an irrelevant constant.

While both firms remain at the legacy price, each serves $D/2$. If the rival reprices first, the remaining low-price seller serves $D$ until it also reprices; after both have repriced, each again serves $D/2$. Under the normalization just described, the pre-rival and post-rival waiting flows are
\[
g_0 = \frac{m_L D}{2}, \qquad g_1 = m_L D - \frac{m_H D}{2} = (1+\delta) g_0, \qquad
\delta \equiv 1 - \frac{m_H}{m_L} \in [0,1].
\]
Thus the rival's repricing weakly raises the normalized current waiting rent whenever $m_L \ge m_H$, and raises it strictly when $m_L > m_H$. The original doubling case is the special case $\delta = 1$. The condition $m_L \ge m_H$ is
precisely the benchmark rival-enhanced-waiting condition (\ref{eq:rival-enhanced}) for this embedding.

Let initial vintage inventory be $B_0 > 0$. Before the rival reprices, the waiting-enabling inventory evolves as
\[
b_0(t) = B_0 - \frac{D}{2} t.
\]
If the rival reprices at date $s$ and the remaining seller stays at the legacy price until $t \ge s$, ordinary demand rises to $D$, so
\[
b_1(t;s) = B_0 - D t + \frac{D}{2} s.
\]
To keep this ordinary-demand path feasible throughout the timing horizon, impose the transparent sufficient restriction $B_0 - DT \ge 0$. Then $b_1(t;s) \ge B_0 - DT \ge 0$ for every $0 \le s \le t \le T$. A natural maximal robust-feasibility choice is $T = B_0/D$; more generally, any $T \le B_0/D$ is admissible.

The rival's repricing can also create an immediate customer-service or commitment exposure whose severity is greater when the remaining buffer is small. Let $Q > 0$ be a benchmark scale of redirected orders or commitments, let
$L_0 > 0$ and $c > 0$, and define
\[
\mathcal{L}(b) = L_0 + c \max\{Q - b, 0\}.
\]
The map is continuous and strictly positive, is weakly decreasing in the buffer, and is strictly decreasing for $b < Q$. It can summarize the present value of unfilled commitments, emergency procurement, or durable customer loss
from advertising a low price with little remaining stock. Importantly, $\mathcal{L}$ is an exposure cost, not an additional physical quantity subtracted from the ordinary-demand buffer path; ordinary redirected-demand depletion is already captured by $b_1$. Thus there is no double counting. In the general notation, $\Lambda(t) = \mathcal{L}\bigl(b_0(t)\bigr)$.

For the inventory clock to affect the penalty strictly throughout the horizon, one may impose $Q > B_0$; more generally, it is enough that $b_0(t) < Q$ on a positive-measure part of the equilibrium support. Without such a
restriction, $\mathcal{L}$ can be locally flat, in which case inventory remains a feasibility state but does not locally move the penalty.

Substituting these primitives into (\ref{eq:markov-sep-ode}), the equilibrium ODE becomes
\begin{equation}\label{eq:bertrand-ode}
\dot{F}(t) = a_B(t)\bigl[1 + \delta F(t)\bigr], \qquad
a_B(t) \equiv \frac{m_L D}{2\,\mathcal{L}\bigl(b_0(t)\bigr)}.
\end{equation}
For $\delta > 0$, solving (\ref{eq:bertrand-ode}) gives $1 + \delta F(t) = \exp\bigl\{\delta \int_{t_0}^{t} a_B(u)\, du\bigr\}$; for $\delta = 0$, $F(t) = \int_{t_0}^{t} a_B(u)\, du$. Define $\phi(\delta) = \log(1+\delta)/\delta$ for $\delta > 0$ and
$\phi(0) = 1$. The exact mass condition is $\int_0^{T} a_B(u)\, du > \phi(\delta)$, and the unique support start satisfies
\begin{equation}\label{eq:bertrand-support-start}
\int_{t_0}^{T} a_B(u)\, du = \phi(\delta).
\end{equation}

The key feedback is now explicit. Holding the legacy price earns a positive normalized rent. A rival's price increase redirects demand and, under $m_L > m_H$, raises that rent, creating the local second-mover flow advantage. The same redirected demand runs down vintage inventory faster and, when the remaining buffer is small, raises exposure to unfilled commitments, emergency procurement, backorders, and durable customer loss---the last-mover disadvantage. Inventory therefore affects timing both through feasibility and through the buffer-dependent exposure $\Lambda(t)$: the profitable activity itself runs down the state that disciplines delay. Because the cost shock is permanent, a return from the adjusted price to the legacy price lies outside this adjustment episode; later sales, cost reversals, or new repricing decisions can be treated as new timing problems.

The motivating retail language is familiar: ``10\% off while supplies last.'' The phrase explicitly ties a temporarily low price to a finite stock, which is precisely the buffer logic above. It does not require the stronger claim that a
retailer can never revise its plan; rather, it is consistent with a promotion or repricing episode that is costly to reopen once implemented. A frictionless benchmark in which every rival move triggers costless, instantaneous
replanning is therefore an analytically legitimate alternative, but need not be the institution represented by an inventory-limited retail offer.

Relative to adjacent work on inventory competition, the RWoA application isolates the timing role of the productive buffer. \citet{lippman1997competitive} and \citet{mahajan2001inventory} study strategic inventory competition and demand reallocation under stockouts. Those papers make inventory strategically important, but the strategic objective is inventory choice rather than the timing of exiting a profitable legacy state. Most importantly, in RWoA, inherited inventory is instead the endogenous clock: the same finite stock supports profitable low-price waiting and disciplines its duration.

\subsection{Competitive Maintenance}\label{sec:6.2}

Consider two firms operating revenue-producing machines, aircraft, engines, or production lines. Each has a remaining useful life buffer measured in hours, cycles, mileage, cumulative load, or deterioration state. With both assets in
operation, each earns $g_0 > 0$ and its buffer is depleted at a baseline rate $q_0 > 0$.

If one firm withdraws for maintenance, thus stopping production, the rival captures diverted demand. Its operating flow rises to $g_1 > g_0$, but higher utilization accelerates wear, so $q_1 > q_0$. If the remaining firm, now the
second and last mover, has too little useful life, it faces breakdown, emergency maintenance, regulatory violation, or costly downtime. These consequences generate a positive late-mover-penalty map $\mathcal{L}(b)$, typically with
$\mathcal{L}'(b) < 0$ on the relevant range.

The single-agent maintenance problem commonly produces a deterministic intervention rule. Strategic competition changes the timing logic. Neither firm wants to withdraw first because doing so concedes revenue to the demand-stealing rival. Yet remaining in operation after the rival withdraws becomes increasingly dangerous because the extra business consumes the remaining-life buffer faster, raising the risk of exhaustion and breakdown. Remaining useful life is therefore another literal productive buffer clock: operation earns the rent and simultaneously depletes the stock that limits how long it can continue. Once maintenance is initiated, return to operation after completion starts a new operating cycle rather than reversing the maintenance decision within the same episode.

\subsection{Other Productive Buffers}\label{sec:6.3}

The RWoA mechanism applies whenever the waiting state is supported by a finite, economically productive stock and rival action changes the waiting rent, depletion rate, or late-mover exposure. Strong candidates include nonrenewable
reserved AI compute or test capacity, remaining useful life or time-to-scheduled-maintenance for a machine or technology that is profitably running, and the stock of still-profitable legacy products that will ultimately be
replaced by newer versions. A productive buffer ticking as an endogenous clock must be economically necessary for profitable waiting, be depleted or degraded by that waiting, and create an exposure that cannot be undone costlessly
by moving an instant after the rival.

\section{Continuum Replication, Heterogeneous Local Markets, and Deterministic Aggregation}\label{sec:7}

The canonical RWoA is bilateral. For aggregate applications, the economically interesting setup seems to be the one with a continuum of heterogeneous disjoint local binary markets. Index local markets by $j \in [0,1]$. Market $j$ has
type $\theta(j) \in \Theta$, with cross-sectional type distribution $H$. Type summarizes local primitives---for example, inventory, demand, margins, depletion rates, or maintenance conditions---and is common to the two symmetric
players within that local market. Suppose the assumptions of Theorem \ref{thm1} hold for $H$-almost every type, generating a type-specific equilibrium CDF $F_\theta(t)$, and assume $(\theta,t) \mapsto F_\theta(t)$ is measurable. Conditional on
type, the two players in a local market draw independently from $F_\theta$. Across local markets, assume the pairs of idiosyncratic stopping draws are essentially pairwise independent in the sense required for the exact law of large
numbers.

Formally, the exact aggregation statement is understood on a rich Fubini extension in the sense of \citet{sun2006exact}. An ordinary product of a continuum index space and a probability space does not in general provide the jointly measurable essentially pairwise-independent process required for an exact law of large numbers; the Fubini-extension assumption supplies that measure-theoretic foundation.

Define the cross-sectional mean local switching CDF and the aggregate individual waiting share by
\[
F_H(t) = \int_{\Theta} F_\theta(t)\, H(d\theta), \qquad S(t) = 1 - F_H(t).
\]
A type-$\theta$ local market can be in three states: $WW$, with both players waiting; $WA/AW$, with exactly one waiting; and $AA$, with both having moved. Conditional independence within a local market gives the type-specific
probabilities $(1-F_\theta)^2$, $2F_\theta(1-F_\theta)$, and $F_\theta^2$.

\begin{proposition}[Heterogeneous continuum aggregation]\label{prop3}
In a continuum of heterogeneous local RWoA markets as above, individual stopping remains random but the aggregate state shares are deterministic and equal to
\[
\mu_{\mathrm{WW}}(t) = \int_{\Theta} \bigl[1 - F_\theta(t)\bigr]^2 H(d\theta),
\]
\[
\mu_{\mathrm{WA/AW}}(t) = \int_{\Theta} 2 F_\theta(t)\bigl[1 - F_\theta(t)\bigr] H(d\theta),
\]
\[
\mu_{\mathrm{AA}}(t) = \int_{\Theta} F_\theta(t)^2 H(d\theta).
\]
If $(\theta,t) \mapsto f_\theta(t)$ is jointly measurable, $F_\theta$ has density $f_\theta$ for $H$-almost every $\theta$, and $0 \le f_\theta(t) \le \psi(\theta)$ on the relevant interval for some $\psi \in L^1(H)$, then, for
almost every $t$, differentiation under the integral gives
\[
\dot{\mu}_{\mathrm{AA}}(t) = 2 \int_{\Theta} F_\theta(t) f_\theta(t)\, H(d\theta).
\]
\end{proposition}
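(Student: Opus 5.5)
The plan is to work on the rich Fubini extension $(I\times\Omega,\mathcal{I}\boxtimes\mathcal{F},\lambda\boxtimes P)$ of \citet{sun2006exact}, with $I=[0,1]$ indexing local markets. For each fixed $t$ I will define three indicator processes on $I\times\Omega$. The first is $X^{WW}_t(j,\omega)=\mathbf{1}\{\tau_{1j}(\omega)>t,\ \tau_{2j}(\omega)>t\}$. The second, $X^{WA}_t$, is the indicator that exactly one of the two draws in market $j$ is at most $t$. The third is $X^{AA}_t=\mathbf{1}\{\tau_{1j}\le t,\ \tau_{2j}\le t\}$. Each is a measurable function of the pair $(\tau_{1j},\tau_{2j})$. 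Essential pairwise independence of these pairs across $j$ therefore transfers to each indicator process, since functions of essentially pairwise independent random elements remain essentially pairwise independent. The aggregate share of state $WW$ at $t$ is then $\int_I X^{WW}_t(j,\omega)\,\lambda(dj)$, and similarly for the other two states.

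The first step is the exact law of large numbers \citep{sun2006exact}. It gives, for $P$-almost every $\omega$, that $\int_I X_t(j,\omega)\lambda(dj)=\int_I E[X_t(j,\cdot)]\lambda(dj)$. This holds separately for each state and each fixed $t$. If a statement simultaneous in $t$ is wanted, I will take $t$ rational and extend by monotonicity and right-continuity of the paths in $t$; the $AA$ share is nondecreasing, and $WW$ is handled through its complement.

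The second step computes the expectations. Conditional on $\theta(j)$, the two draws are independent with CDF $F_{\theta(j)}$. So $E[X^{WW}_t(j)]=(1-F_{\theta(j)}(t))^2$, $E[X^{WA}_t(j)]=2F_{\theta(j)}(t)(1-F_{\theta(j)}(t))$ and $E[X^{AA}_t(j)]=F_{\theta(j)}(t)^2$. Because $\theta$ pushes $\lambda$ forward to $H$ and $(\theta,t)\mapsto F_\theta(t)$ is measurable, the change-of-variables formula turns each $\int_I\cdot\,\lambda(dj)$ into the stated $\int_\Theta\cdot\,H(d\theta)$. This yields the three formulas, and they are deterministic.

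The derivative claim is a separate real-analysis step. For $H$-a.e.\ $\theta$, $F_\theta$ is absolutely continuous, so $F_\theta(t)^2=\int_0^t 2F_\theta(u)f_\theta(u)\,du$. Fubini–Tonelli applies because $0\le 2F_\theta f_\theta\le 2\psi(\theta)$ and $\psi\in L^1(H)$, and joint measurability of $f_\theta$ supplies the measurability it needs. It gives $\mu_{AA}(t)=\int_0^t\bigl[\int_\Theta 2F_\theta(u)f_\theta(u)H(d\theta)\bigr]du$, where the inner integral is an $L^1_{\mathrm{loc}}$ (indeed bounded) function of $u$. The Lebesgue differentiation theorem then gives the derivative formula for almost every $t$.

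I expect the main obstacle to be the measure-theoretic transfer in the first step. One has to check that the indicator processes are $\mathcal{I}\boxtimes\mathcal{F}$-measurable and remain essentially pairwise independent. One also has to make sure that conditioning on the type is legitimate, with $\theta$ deterministic in $j$ so that expectations are taken pointwise in $j$. I would handle this by taking as part of the hypothesis that the pair process $(\tau_{1j},\tau_{2j})$ is measurable on the Fubini extension with conditional law $F_{\theta(j)}\otimes F_{\theta(j)}$, and then invoking the ELLN directly.
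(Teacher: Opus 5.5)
Your proposal is correct and follows the paper's proof: you apply the exact law of large numbers on the rich Fubini extension to the local-state indicators, obtain the binomial state probabilities from within-market conditional independence, and integrate against $H$; your rational-$t$ argument for a single null set is a harmless addition the paper does not spell out. For the derivative you integrate first via Tonelli and then apply Lebesgue differentiation, which is the clean almost-everywhere form of the paper's ``dominated differentiation''; it also shows that the bound $\psi$ is not actually needed, since $\int_\Theta\int_0^T 2F_\theta(u) f_\theta(u)\,du\,H(d\theta)=\int_\Theta F_\theta(T)^2\,H(d\theta)\le 1$.
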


\begin{proof}
On the Fubini extension described above, apply the exact law of large numbers to the local-market state indicators generated by the essentially pairwise independent market-level stopping pairs \citep{sun2006exact}. Conditional on type, the two stopping draws within a local market are independent, so the three local-state probabilities are the displayed binomial probabilities. Integrating those probabilities over the cross-sectional type distribution $H$ gives the
aggregate shares. The final formula follows by dominated differentiation.
\end{proof}

Heterogeneity is not innocuous. Let
\[
V_F(t) = \int_{\Theta} \bigl[F_\theta(t) - F_H(t)\bigr]^2 H(d\theta).
\]
Then the state shares admit the exact decomposition
\[
\mu_{\mathrm{AA}} = F_H^2 + V_F, \qquad \mu_{\mathrm{WW}} = (1 - F_H)^2 + V_F, \qquad \mu_{\mathrm{WA/AW}} = 2 F_H (1 - F_H) - 2 V_F.
\]
Thus, holding fixed the average fraction switched, cross-market heterogeneity puts more mass on local markets in the same state---both waiting or both moved---and less mass on split markets.

The continuum setup studied resembles mean-field environments but differs substantially from a Mean Field Game (MFG) in the modern Lasry--Lions or Huang--Malham\'e--Caines formulation. In those MFG models, individual optimization and
the population distribution are linked by an equilibrium consistency condition \citep{huang2006large,lasry2007mean}. Here the strategic fixed point is bilateral and is solved within each local market before aggregation, so the cross-sectional distribution does not feed back into local RWoA payoffs. This section is therefore an exact-LLN aggregation of solved local games, not an MFG equilibrium. It seems closer in spirit to the population-game/mean-dynamic language in which stochastic individual revisions generate deterministic aggregate shares \citep{sandholm2015population}.

The homogeneous replication is obtained as the special case $F_\theta(t) \equiv F(t)$, so $V_F(t) = 0$. Then
\begin{align*}
S(t) &= 1 - F(t), & \mu_{\mathrm{WW}}(t) &= [1-F(t)]^2, \\
\mu_{\mathrm{WA/AW}}(t) &= 2F(t)[1-F(t)], & \mu_{\mathrm{AA}}(t) &= F(t)^2.
\end{align*}
If $h(t) = f(t)/[1-F(t)]$ is the individual conditional switching hazard, the flow into the fully switched state is
$\dot{\mu}_{AA}(t) = 2F(t)f(t) = 2F(t)[1-F(t)]h(t)$. At the common endogenous support start $t_0$, $F(t_0) = 0$ and the indifference equation gives $f(t_0) = g_0(t_0)/\Lambda(t_0) > 0$. Interpreting derivatives at the support boundary
from the right,
\[
\dot{\mu}_{\mathrm{AA}}(t_0) = 0, \qquad \ddot{\mu}_{\mathrm{AA}}(t_0) = 2 f(t_0)^2 > 0.
\]
Thus, in the homogeneous benchmark, the share of fully switched markets has zero right-hand slope and a strictly positive right-hand second derivative at the onset of mixing, even though individual switching density is already
strictly positive there. This is a second-order aggregation effect: a local market reaches $AA$ only after two independent stopping events. The statement is an onset result, not a claim of global convexity. With heterogeneous
support starts, aggregate onset need not have this same second-order behavior; the general object is the cross-sectional integral of type-specific state probabilities.

More generally, let a bounded local outcome depend on both market type and local state, with values $y_{\mathrm{WW}}(t,\theta)$, $y_{\mathrm{WA}}(t,\theta)$, and $y_{\mathrm{AA}}(t,\theta)$. Its aggregate path is
\[
Y(t) = \int_{\Theta} \Bigl\{ [1-F_\theta(t)]^2 y_{\mathrm{WW}}(t,\theta) + 2F_\theta(t)[1-F_\theta(t)] y_{\mathrm{WA}}(t,\theta)
+ F_\theta(t)^2 y_{\mathrm{AA}}(t,\theta) \Bigr\} H(d\theta).
\]
In applications, $Y$ may be a transaction price, quantity, production status, capacity utilization, maintenance state, or another local observable. The bilateral RWoA determines each $F_\theta$; cross-sectional heterogeneity and the
exact law of large numbers then determine the aggregate path across all markets.

\section{Conclusions}\label{sec:8}

In this paper, we develop RWoA as a distinct timing game in which buffer-enabled waiting earns a positive current rent, a rival's move can raise that rent, and the player that remains in the waiting state after the rival moves is
simultaneously the second mover and the last mover, exposed to a discrete buffer-dependent late-mover penalty. RWoA therefore features neither the first-mover advantage of preemption nor the unconditional last-mover advantage of
classical WoA. In primitive terms, it reverses the waiting-flow sign of classical attrition and locally inverts the preemption force emphasized by \citet{fudenberg1985preemption}, while buffer-generated late-mover exposure prevents indefinite delay.

Our contribution has two parts. First, we combine strategic ingredients with familiar counterparts in the timing-games research---profitable delay, follower or second-mover advantages, mixed stopping, and disadvantages of being late---into the particular RWoA geometry. Second, and most importantly, we introduce the self-depleting productive buffer as the endogenous clock that endows RWoA with a concrete, and seemingly intuitive, economic mechanism. The buffer is not an exogenous time deadline: it makes the waiting state valuable and feasible, while its use progressively reduces the ability to remain in that state. A rival's move can simultaneously strengthen the current reward from waiting and make the clock run faster or increase the late-mover exposure. To our knowledge, prior timing-game work has not used a player-owned productive buffer in this dual enabling-and-disciplining role.

The buffer's dual role is the paper's central economic insight. The underlying stocks are familiar and observable: vintage inventory finances low-price waiting and measures how long it can continue; remaining useful life finances productive operation and measures how long maintenance can be held off. Similar clocks may arise from nonrenewable compute or test capacity and from exhaustible reserves supporting a legacy technology. Under transparent regularity and mass conditions, the bilateral game has a unique symmetric equilibrium in the stated ex-ante committed-date strategy class; that equilibrium is atomless and absolutely continuous, characterized by a Volterra equation and a shooting condition.

Aggregation is deliberately downstream from this bilateral core. A continuum of heterogeneous local binary markets generates deterministic aggregate state shares by integrating type-specific RWoA probabilities; heterogeneity itself
has an exact composition effect, increasing the mass of same-state local markets relative to the identical-market benchmark. This continuum extension admits a population-game mean-dynamic interpretation, but it is not a Mean Field Game equilibrium because, with the continuum of disjoint binary markets, the aggregate distribution does not feed back into the local strategic fixed point.

\begin{appendix}
\section{Details for the Volterra Construction}

\subsection{Existence and Uniqueness for a Fixed Support Start}\label{appA1}

Fix $\tau \in [0,T]$. By Assumption \ref{assump1}, on the triangle $\tau \le s \le t \le T$, $a$ and $K$ are continuous. For the estimates below, write $A = \lVert a \rVert_{\infty}$ and $B = \lVert K \rVert_{\infty}$. Here the sup norms are taken over the full domains $[0,T]$ for $a$ and $0 \le s \le t \le T$ for $K$, so $A$ and $B$ are independent of $\tau$. Define successive approximations
\[
f_\tau^{(0)}(t) = a(t),
\]
\[
f_\tau^{(m+1)}(t) = a(t) + \int_{\tau}^{t} K(t,s) f_\tau^{(m)}(s)\, ds.
\]
\[
\bigl| f_\tau^{(m+1)}(t) - f_\tau^{(m)}(t) \bigr| \le A\,\frac{B^{m+1} (t-\tau)^{m+1}}{(m+1)!}.
\]
The displayed factorial bound is uniformly summable on the interval, so the successive-approximation sequence is uniformly Cauchy and converges uniformly to a solution of the Volterra equation. If $f$ and $\tilde{f}$ are two solutions, their difference satisfies $|f(t) - \tilde{f}(t)| \le B \int_{\tau}^{t} |f(s) - \tilde{f}(s)|\, ds$; Gronwall's inequality gives uniqueness. Finally, $a > 0$ and $K \ge 0$ imply that the solution is strictly positive.

The fixed-point and factorial-bound argument is the standard successive-approximation construction for a linear Volterra equation of the second kind; see \citet[Chapter 1]{brunner2017volterra} and, in the more general framework used here, \citet[Theorem 2.1.1, pp.~86--87]{kolokoltsov2019differential}.

\subsection{Strict Monotonicity and Continuity of the Shooting Map}\label{appA2}

Let $\tau_1 < \tau_2$ and write $f_1 = f_{\tau_1}$ and $f_2 = f_{\tau_2}$. For $t \ge \tau_2$,
\begin{equation}\label{eq:shooting-diff}
f_1(t) - f_2(t) = \int_{\tau_1}^{\tau_2} K(t,s) f_1(s)\, ds + \int_{\tau_2}^{t} K(t,s)\bigl[f_1(s) - f_2(s)\bigr] ds.
\end{equation}
\[
M(\tau_1) - M(\tau_2) = \int_{\tau_1}^{\tau_2} f_1(s)\, ds + \int_{\tau_2}^{T} \bigl[f_1(s) - f_2(s)\bigr] ds > 0.
\]
Because the forcing term in formula (\ref{eq:shooting-diff}) is nonnegative and $K$ is nonnegative, Picard iteration (equivalently, positivity of the Volterra resolvent) implies $f_1(t) \ge f_2(t)$ on $[\tau_2,T]$. The displayed identity then gives strict
monotonicity of $M$.

To establish continuity, the global bounds fixed in Appendix \ref{appA1} and Gronwall's inequality give, uniformly in $\tau$,
\[
0 < f_\tau(t) \le C := A e^{BT}, \qquad t \in [\tau,T].
\]
For $\tau_1 < \tau_2$, equation (\ref{eq:shooting-diff}) therefore implies
\[
|f_1(t) - f_2(t)| \le BC(\tau_2 - \tau_1) + B \int_{\tau_2}^{t} |f_1(s) - f_2(s)|\, ds.
\]
Applying Gronwall's inequality yields
\[
\lVert f_1 - f_2 \rVert_{C([\tau_2,T])} \le B C e^{BT} |\tau_2 - \tau_1|.
\]
Consequently,
\[
0 < M(\tau_1) - M(\tau_2) \le \bigl[C + TBCe^{BT}\bigr] |\tau_2 - \tau_1|.
\]
Hence $M$ is Lipschitz continuous, as well as strictly decreasing.

\subsection{Verification}\label{appA3}

For the selected $t_0$, equation (\ref{eq:payoff-diff}) together with the Volterra equation gives a zero payoff derivative for almost every date in $[t_0,T]$. Since expected payoff is absolutely continuous, it is constant on this support. Before $t_0$, $F = 0$ and the derivative is $e^{-rt} g_0(t) > 0$. Hence payoff is strictly increasing up to $t_0$. Continuity handles the boundary points. This verifies every pure deviation; linearity of expected payoff verifies every mixed deviation. Together with Proposition \ref{prop2}, Lemma \ref{lem1}, and the shooting argument, this also verifies uniqueness among symmetric equilibria in the committed-date strategy class.

\subsection{Two Special-Case Calculations}

For the pricing application, formula (\ref{eq:bertrand-ode}) is $\dot{F} = a_B(t)(1 + \delta F)$. If $\delta > 0$, then $d\log(1+\delta F)/dt = \delta a_B(t)$; using $F(t_0) = 0$ and $F(T) = 1$ gives
$\int_{t_0}^{T} a_B(u)\, du = \log(1+\delta)/\delta$. If $\delta = 0$, direct integration gives $\int_{t_0}^{T} a_B(u)\, du = 1$. This proves formula (\ref{eq:bertrand-support-start}) with the continuous convention $\phi(0) = 1$.

For the homogeneous aggregation benchmark, $\mu_{AA}(t) = F(t)^2$ and $\dot{\mu}_{AA}(t) = 2F(t)f(t)$ for $t > t_0$. Since $F(t_0) = 0$, continuity of the Volterra density gives $F(t_0+h)/h \to f(t_0)$ and $f(t_0+h) \to f(t_0)$ as
$h \downarrow 0$. Thus the right-hand first derivative of $\mu_{AA}$ at $t_0$ is zero, while its right-hand second derivative is $2f(t_0)^2 > 0$. No differentiability of $f$ is needed for this onset calculation.
\end{appendix}


\bibliographystyle{econsoc} 
\bibliography{bibliography}  

\end{document}